\newcommand{\quality}{
\mathcal{Q}\hspace{-1mm}\left[\ %
\InputIfFileExists{Diagrams/plainleaksmall.tikz}{}{\input{./figures/Diagrams/plainleaksmall.tikz}}\ \right]
}
\begin{document}

\begingroup
\centering
{\Large\textbf{Leaks: quantum, classical, intermediate, and more} \\[1.5em]
 \normalsize  John H. Selby\textsuperscript{$\ast,\dagger$,}\footnote{Electronic address: john.selby08@imperial.ac.uk} and Bob Coecke\textsuperscript{$\dagger$}
}
\\[1em]
 \it \textsuperscript{$\ast$} Department of Physics, Imperial College London,  London SW7 2AZ, UK. \\  \it \textsuperscript{$\dagger$} Department of Computer Science, University of Oxford, OX1 3QD, UK. \\

\endgroup

\begin{abstract}
We introduce the notion of a leak for general process theories and identify quantum theory as a theory with minimal leakage, while classical theory has maximal leakage. We provide a~construction that adjoins leaks to theories, an instance of which describes the emergence of classical theory by adjoining decoherence leaks to quantum theory. Finally, we show that defining a notion of purity for processes in general process theories has to make reference to the leaks of that theory, a feature missing in standard definitions; hence, we propose a refined definition and study the resulting notion of purity for quantum, classical and intermediate theories.
\end{abstract}

\section{Introduction}

Can we explain why the world is quantum by finding some sense in which quantum theory is an optimal theory? Broadcasting distinguishes quantum theory from classical theory in that quantum states cannot be broadcast \cite{Nobroadcast}, but neither can the states of many other theories \cite{BBLW, CKpaperI}. Non-locality is a measure of non-classicality, and quantum theory is non-local, but not maximally so \cite{PRBox}. Therefore, is there some manner in which we can uniquely single out quantum theory? In this paper, we show that quantum theory is a leak-free theory, whilst classical theory is maximally leaking. We formalise the notion of a leak, which can roughly be thought of as a `one-sided broadcasting map', within the process-theoretic framework \cite{AC1, CKpaperI, CKBook} as a particular type of process, which, as the name suggests, accounts for leaking state-data into the environment.

Moreover, there is a natural way to introduce leaks to any theory, and by doing so, we obtain new theories. We call this the leak construction. In particular, classical theory can be obtained from quantum theory in this manner, where, in this example, the leaking is then nothing but decoherence \cite{kuperberg2003capacity, zurek2009quantum}. Hence, the concept of a leak allows us to generalise decoherence to arbitrary process theories. Besides classical theory, any theory characterised by some finite-dimensional C*-algebra can be obtained in this manner from quantum theory. In fact, as we show in a follow-up paper \cite{CST}, only C*-algebras can be obtained in this manner. Leaks therefore capture the operational content of finite-dimensional C*-algebras on-the-nose, in a manner that does not involve any additive structure, nor a $*$-operation.

Finally, we observe that defining {purity of processes} in process theories with leaks is problematic; in~particular, this is the case for classical theory. Making explicit use of the concept of a leak, we therefore propose a new definition that makes sense for arbitrary processes in arbitrary process~theories.

\subsection*{Related Work} 

As explained in detail in the follow-up paper \cite{CST}, the leak construction is related to the ``constructions of classical system types'' in \cite{SelingerIdempotent,heunen2013completely,cunningham2015axiomatizing}. More specifically, in the case of quantum theory, we exactly obtain the same result, but in a much simpler way, with much less use of structure and~guided by a clear operational meaning.
The notion of a leak is closely related to the decomposability of a state-space \cite{Barrett} in the generalised probabilistic theory framework as, at least under some standard assumptions, such as the ``no-restriction hypothesis'', each is equivalent to the existence of a non-disturbing measurement as discussed in \cite{richens2016entanglement}.

\section{Process Theories with Discarding...
}

A process theory \cite{CKpaperI, CKBook} is a collection of systems that are represented by wires and processes that are represented by boxes with wires as inputs (at the bottom) and outputs (at the top). Moreover, when we plug these boxes together:
\[
\InputIfFileExists{Diagrams/compound-process.tikz}{}{\input{./figures/Diagrams/compound-process.tikz}}
\]
the resulting diagram should also be a process. To be mathematically more precise, the data that make up a diagram are:
\bit
\item the boxes that appear in the diagram and
\item how these boxes are wired together, including the overall ordering of inputs/outputs.
\eit
Hence, two diagrams are equal when these data match up.

By a circuit \cite{CKpaperI, CKBook}, we mean a diagram that can be constructed by means of the obvious operations of parallel composition $\otimes$ and sequential composition $\circ$ of boxes. For example, the following diagrams is a circuit:
\begin{align*}
\InputIfFileExists{Diagrams/compound-process-notypes.tikz}{}{\input{./figures/Diagrams/compound-process-notypes.tikz}}\ \
& = \ \ %
\InputIfFileExists{Diagrams/funcex.tikz}{}{\input{./figures/Diagrams/funcex.tikz}} \vspace{2.5mm}\\
& = \, \left(\ \shortidwire \, \otimes\ %
\InputIfFileExists{Diagrams/compose-g-bis.tikz}{}{\input{./figures/Diagrams/compose-g-bis.tikz}}\ \right) \circ\left(\ %
\InputIfFileExists{Diagrams/compose-f-bis.tikz}{}{\input{./figures/Diagrams/compose-f-bis.tikz}} \ \otimes\ \boxmap{h}\ \right)
\end{align*}

Composite systems, denoted $A\otimes B$, then simply arise by pairing wires:
\[
\begin{tikzpicture}
	\begin{pgfonlayer}{nodelayer}
		\node [style=none] (0) at (0, -0.75) {};
		\node [style=none] (1) at (0, 0.75) {};
		\node [style=right label, yshift=-2] (2) at (0.25, 0) {$A\otimes B$};
	\end{pgfonlayer}
	\begin{pgfonlayer}{edgelayer}
		\draw [style=swap] (1) to (0.center);
	\end{pgfonlayer}
\end{tikzpicture}} \ \ := \ \ %
\begin{tikzpicture}
	\begin{pgfonlayer}{nodelayer}
		\node [style=none] (0) at (0, -0.75) {};
		\node [style=none] (1) at (0, 0.75) {};
		\node [style=right label, yshift=-2] (2) at (0.25, 0) {$A$};
	\end{pgfonlayer}
	\begin{pgfonlayer}{edgelayer}
		\draw [style=swap] (1) to (0.center);
	\end{pgfonlayer}
\end{tikzpicture}}\ %
\begin{tikzpicture}
	\begin{pgfonlayer}{nodelayer}
		\node [style=none] (0) at (0, -0.75) {};
		\node [style=none] (1) at (0, 0.75) {};
		\node [style=right label, yshift=-2] (2) at (0.25, 0) {$B$};
	\end{pgfonlayer}
	\begin{pgfonlayer}{edgelayer}
		\draw [style=swap] (1) to (0.center);
	\end{pgfonlayer}
\end{tikzpicture}}
\]

\begin{remark}
A process theory with circuits as diagrams can also be defined as a strict symmetric monoidal category. Strictness means that associativities and unit laws hold on-the-nose, unlike the symmetric monoidal categories of concrete mathematical models where non-trivial associativity and unit natural isomorphisms are required. Fortunately, by Mac Lane's strictification theorem \cite{MacLane}, every such category is categorically equivalent (although not isomorphic) to a strict one, which means that for all practical purposes, it can be thought of as a~strict~one.
\end{remark}

A state is a process without inputs; an effect is a process without outputs; and a number is a process with neither inputs nor outputs. One special number is the empty diagram:
\[
\emptydiag
\]
which in most theories coincides with the number one.

Throughout this paper, for each system in a process theory, we postulate the existence of a~discarding effect, which is interpreted just as its name indicates and which is denoted as:
\[
\begin{tikzpicture}
	\begin{pgfonlayer}{nodelayer}
		\node [style=none] (0) at (0, -0.75) {};
		\node [style=upground] (1) at (0, 0.25) {};
	\end{pgfonlayer}
	\begin{pgfonlayer}{edgelayer}
		\draw [style=swap] (1) to (0.center);    
	\end{pgfonlayer} 
\end{tikzpicture}}
\]
We also make the natural assumption that discarding effects compose:
\beq\label{eq:disccomp}
\begin{tikzpicture}
	\begin{pgfonlayer}{nodelayer}
		\node [style=none] (0) at (0, -0.75) {};
		\node [style=upground] (1) at (0, 0.25) {};
		\node [style=right label] (2) at (0.25, -0.5) {$A\otimes B$};
	\end{pgfonlayer}
	\begin{pgfonlayer}{edgelayer}
		\draw [style=swap] (1) to (0.center);
	\end{pgfonlayer}
\end{tikzpicture}} \ \ := \ \ %
\begin{tikzpicture}
	\begin{pgfonlayer}{nodelayer}
		\node [style=none] (0) at (0, -0.75) {};
		\node [style=upground] (1) at (0, 0.25) {};
		\node [style=right label] (2) at (0.25, -0.5) {$A$};
	\end{pgfonlayer}
	\begin{pgfonlayer}{edgelayer}
		\draw [style=swap] (1) to (0.center);
	\end{pgfonlayer}
\end{tikzpicture}}\ %
\begin{tikzpicture}
	\begin{pgfonlayer}{nodelayer}
		\node [style=none] (0) at (0, -0.75) {};
		\node [style=upground] (1) at (0, 0.25) {};
		\node [style=right label] (2) at (0.25, -0.5) {$B$};
	\end{pgfonlayer}
	\begin{pgfonlayer}{edgelayer}
		\draw [style=swap] (1) to (0.center);
	\end{pgfonlayer}
\end{tikzpicture}}
\eeq
A process $f$ is causal if we have:
\beq\label{eq:term}
\InputIfFileExists{Diagrams/norm.tikz}{}{\input{./figures/Diagrams/norm.tikz}}
\eeq
and a theory is causal if all of the processes of the theory are causal. Therefore, except for the fact that it composes, discarding is not subject to any defining constraints. In a sense, its behaviour is entirely implicit within its role within the defining equation of causality. In particular, by Equation (\ref{eq:term}) where $f$ is taken to be an effect, it immediately follows that the only effects in a causal theory are the discarding effects. In this form, the axiom of causality traces back to \cite{Chiri1}. When restricting to causal processes, a process theory is non-signalling \cite{Cnonsig}; hence, the causality of a theory is vital to guarantee compatibility with relativity.

\begin{example}[Classical probability theory]\label{ex:classtheory}
When viewing probability theory as a process theory, systems are $n$-state classical systems and boxes are $n\times m$ stochastic matrices, and so, in particular, states are probability distributions. Discarding is given by marginalisation, and so, causality boils down to the fact that the entries of a~probability distribution add up to one and that the entries in each column of a stochastic matrix add up to one.
\end{example}

\begin{example}[Quantum theory]\label{ex:quantumtheory}
Quantum theory as a process theory has finite dimensional Hilbert spaces ${\cal H}$ as its systems and completely positive trace preserving (CPTP) maps:
\[
\xi: {\cal B(H)}\to {\cal B(H')}
\]
as its processes. Causality for density operators means having trace one and for completely positive maps means being trace-preserving. One can also include classical data as additional systems, and then measurements and controlled operations are also processes. If this is the case, we will often denote the classical systems as dotted wires to distinguish them from quantum wires. Specifically, measurements are processes from quantum to classical systems where the probabilities of obtaining the different outcomes are encoded in the classical system. Causality then implies that, for projective measurements, the projectors form a resolution of the identity and, for general measurements that the POVM elements sum to discarding. A full description and a pedagogical introduction to this theory is in \cite{CKpaperI, CQMII, CKBook}.
\end{example}

Typically, as will be the case in the examples below, we will want to describe both causal and non-causal processes. We therefore will still, for each system, have a discarding map, which specifies the causal processes, but there will also be other processes that will not satisfy Equation (\ref{eq:term}). There are two main reasons for this. The first is to allow us to discuss {events}, i.e.,~processes that we cannot make happen deterministically, but that can occur as a particular outcome in some experiment; therefore, allowing us to obtain the probability of obtaining a specific outcome, which, in particular, allows us, via suitable renormalisation, to describe {post-selection}.
The second reason is mathematical simplicity: it is often much easier to define the process theory, or various structures within it, in the non-causal setting and then to restrict to the causal sub-theory when necessary.

\begin{example}[Non-causal extension of quantum theory]\label{Ex:NonCausalQ}
To describe non-causal processes in quantum theory, rather than taking processes as completely positive trace-preserving maps, we instead just require that they are completely positive. It is very standard within quantum theory to consider such processes, for example Dirac bras are non-causal or, more generally, individual POVM elements are non-causal.

An important tool is the Choi--Jamiolkowski isomorphism between transformations and bipartite states. One direction of this isomorphism can be realised causally, using the Bell state, which we represent, up to a~normalisation factor, with a cup-shaped wire:
\[
{1\over D}\ \, %
\begin{tikzpicture}
	\begin{pgfonlayer}{nodelayer}
		\node [style=none] (0) at (-0.75, 0.5) {};
		\node [style=none] (1) at (0.75, 0.5) {};
		\node [style=none] (2) at (0.75, 0.5) {};
		\node [style=none] (3) at (-0.75, 0.5) {}; 
	\end{pgfonlayer}
	\begin{pgfonlayer}{edgelayer}
		\draw [in=-90, out=-90, looseness=1.75] (0.center) to (1.center);
	\end{pgfonlayer}
\end{tikzpicture}} \qquad\qquad\qquad \text{where} \qquad\qquad\qquad D:=%
\begin{tikzpicture}
	\begin{pgfonlayer}{nodelayer}
		\node [style=none] (0) at (1.5, 0.25) {};
		\node [style=none] (1) at (0, 0.25) {};
		\node [style=upground] (2) at (0, 0.5000001) {};
		\node [style=upground] (3) at (1.5, 0.5000001) {};
	\end{pgfonlayer}
	\begin{pgfonlayer}{edgelayer}
		\draw [ bend right=90, looseness=1.50] (1.center) to (0.center);
	\end{pgfonlayer}
\end{tikzpicture}
}
\]
which allows us to ``bend wires up'':
\[
\InputIfFileExists{Diagrams/purefadj.tikz}{}{\input{./figures/Diagrams/purefadj.tikz}}\ \ \mapsto\ \ {1\over D}\ \, %
\InputIfFileExists{Diagrams/cupType_up.tikz}{}{\input{./figures/Diagrams/cupType_up.tikz}}
\]
This associates with each (causal) process a (causal) bipartite state. The other direction is however not realisable causally, as it relies on the Bell effect, which we represent with a cap-shaped wire:
\[
\InputIfFileExists{Diagrams/bipartiteState.tikz}{}{\input{./figures/Diagrams/bipartiteState.tikz}}\ \ \mapsto\ \ D\ \, %
\InputIfFileExists{Diagrams/purefbend.tikz}{}{\input{./figures/Diagrams/purefbend.tikz}}
\]
The fact that this is an isomorphism provides us with the following intuitive diagrammatic rule (justifying the representation of these as a cup and cap):
\beq \label{eq:yank}
\InputIfFileExists{Diagrams/wire_yank_all.tikz}{}{\input{./figures/Diagrams/wire_yank_all.tikz}}
\eeq
It is then clear that the cap cannot be causal (even up to a rescaling) as, if it were, then the identity transformation would be separable, i.e.:
\[
\InputIfFileExists{Diagrams/capNotDiscard.tikz}{}{\input{./figures/Diagrams/capNotDiscard.tikz}}
\]
where in the second step we relied on the fact that by causality, all effects must be discarding, so in particular, the cap, as well as Equation (\ref{eq:disccomp}).
\end{example}

\begin{example}[Non-causal extension of classical theory]\label{Ex:NonCausalCl}
We can similarly extend classical theory, taking processes as $n\times m$ matrices with positive real elements as opposed to stochastic matrices. This again allows us to discuss particular outcomes of measurements, which may not happen with certainty, and moreover, gives us a~classical equivalent of the Choi--Jamiolkowski isomorphism where rather than using the Bell state and effect, we use the perfectly correlated state and effect, again denoted by a cup and a cap. These can be defined in terms of the orthonormal basis states and effects as:
\[{1\over n} \ %
\begin{tikzpicture}
	\begin{pgfonlayer}{nodelayer}
		\node [style=copoint] (0) at (0, -0) {$i$};
		\node [style=copoint] (1) at (1.5, -0) {$j$};
	\end{pgfonlayer}
	\begin{pgfonlayer}{edgelayer}
		\draw [bend right=90, looseness=1.75] (0) to (1);
	\end{pgfonlayer}
\end{tikzpicture}} \ \ = \ \ {\delta_{ij} \over n}\qquad \qquad\qquad \text{and} \qquad\qquad \qquad %
\begin{tikzpicture}
	\begin{pgfonlayer}{nodelayer}
		\node [style=point] (0) at (0, -0) {$i$};
		\node [style=point] (1) at (1.5, -0) {$j$};
	\end{pgfonlayer}
	\begin{pgfonlayer}{edgelayer}
		\draw [bend left=90, looseness=1.75] (0) to (1);
	\end{pgfonlayer}
\end{tikzpicture}}\ \ = \ \ \delta_{ij}\]
respectively. It is simple to check that these also satisfy Equation (\ref{eq:yank}) as we would expect from the choice of the diagrammatic representation.
\end{example}

This forms the basic structures needed to describe the physical content of process theories; however, we will need some further tools for the proofs. These are all defined in the standard way for categorical quantum mechanics
and surveyed in Appendix \ref{mathsAppendix} for those unfamiliar with the field.

\section{... 
and Leaks}
\begin{definition}
A leak is a process:
\beq\label{eq:leakL}
\InputIfFileExists{Diagrams/leakL.tikz}{}{\input{./figures/Diagrams/leakL.tikz}}
\eeq
which has discarding as a right counit, that is:
\beq\label{eq:leakdisc1}
\InputIfFileExists{Diagrams/leakdisc1.tikz}{}{\input{./figures/Diagrams/leakdisc1.tikz}}
\eeq
\end{definition}

\begin{proposition}
All leaks are causal.
\end{proposition}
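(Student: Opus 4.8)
The plan is to derive causality directly from the defining equation of a leak by post-composing with a single extra discarding effect. Recall that, by definition (\ref{eq:term}), a process is causal precisely when discarding all of its outputs equals discarding all of its inputs; so for the leak of (\ref{eq:leakL}), which has one input and two outputs, I must show that capping \emph{both} output wires with discarding effects reproduces the discarding effect on the input wire.

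First I would take the leak equation (\ref{eq:leakdisc1}), which states that capping only the right-hand output with a discarding effect leaves the identity wire on the remaining output. Then I would post-compose both sides of that equation with a discarding effect on that remaining wire. On the right-hand side this turns the bare identity wire into a single discarding effect on the input system. On the left-hand side it places discarding effects on \emph{both} outputs of the leak.

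The final step is to recognise, using the compositionality of discarding (\ref{eq:disccomp}), that the two discarding effects sitting on the two output wires are precisely the discarding effect of the composite output system. Hence the left-hand side is exactly ``discard all outputs of the leak'', and the equation obtained is exactly the causality condition (\ref{eq:term}) for the leak. There is no real obstacle here: the statement follows by a one-step diagrammatic manipulation, and the only thing to be careful about is the bookkeeping of which wire each discarding effect acts on, together with the use of (\ref{eq:disccomp}) to fuse the two discarding effects into a single one on the composite output.
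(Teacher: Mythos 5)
Your proposal is correct and is essentially identical to the paper's own proof: both derive causality by post-composing the defining leak equation (\ref{eq:leakdisc1}) with a discarding effect on the remaining output and then fusing the two discards via (\ref{eq:disccomp}) into the discard of the composite output system. No gap here.
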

\begin{proof}
Causality of a leak means:
\ctikzfig{Diagrams/leakdisc1proof}
and this equation is obtained by discarding the outputs in (\ref{eq:leakdisc1}).
\end{proof}

When we have multiple leaks around, we may often represent them with different colours to distinguish them.

\begin{proposition}
Leaks compose to give leaks.
\end{proposition}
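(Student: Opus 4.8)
The plan is to write down the composite of two leaks explicitly and then check the right-counit condition (\ref{eq:leakdisc1}) for it, reducing the verification to the defining property of each constituent leak together with the fact that discarding composes, Equation (\ref{eq:disccomp}).

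First I would form the composite. A leak is a process with one input, one output of the same type as the input, and one further ``leak'' output that discarding may act on. Given two leaks $L_1$ and $L_2$, I plug the same-type output of $L_1$ into the input of $L_2$. The result is a single process whose input is the original system, whose distinguished output is the original system emerging from the top of $L_2$, and whose leak output is the pair formed by tensoring the leak wire of $L_1$ with the leak wire of $L_2$. I claim this combined wire is the environment of the composite, and that the composite is again a leak.

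Next I would verify the right-counit law for this composite. Discarding its leak output amounts to discarding the tensor of the two individual leak wires, which by Equation (\ref{eq:disccomp}) equals discarding each of them separately. I would peel these off one at a time: discarding the upper leak wire triggers the leak property (\ref{eq:leakdisc1}) for $L_2$, collapsing $L_2$ to a plain wire on the original system; what then remains is $L_1$ with its own leak output discarded, and a second application of (\ref{eq:leakdisc1}), now to $L_1$, collapses this to the identity. Hence discarding the composite leak yields the identity wire, which is exactly the required counit condition.

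I do not expect a genuine obstacle here: the statement is essentially a bookkeeping exercise in sliding discarding effects down through the stacked diagram. The only point needing care is the appeal to Equation (\ref{eq:disccomp}), which licenses treating the two leak wires as a single composite environment whose discarding factors into the two component discardings; once that factorisation is in place, the counit laws of $L_2$ and then $L_1$ apply in sequence. One could equally discard the two wires in the opposite order, and the counit formulation makes either order return the same identity, so the choice of ordering of the combined environment is immaterial.
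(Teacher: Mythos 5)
Your treatment of sequential composition is correct and is essentially the paper's own argument for that case: discard the combined leak wire, factor the discarding through Equation~(\ref{eq:disccomp}), apply the counit law~(\ref{eq:leakdisc1}) of the second leak to collapse it to a plain wire, then apply the counit law of the first. No issue there.

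However, the proposition as the paper proves it covers \emph{both} modes of composition available in a process theory, and your proposal only addresses one of them. In addition to sequential composition, the paper also verifies \emph{parallel} composition: given a leak on system $A$ with leak wire $L_A$ and a leak on system $B$ with leak wire $L_B$, placing them side by side yields a process $A\otimes B \to (A\otimes B)\otimes(L_A\otimes L_B)$ (after sliding wires past each other), and one must check this is a leak for the composite system $A\otimes B$. The check is routine in the same spirit as your sequential argument --- discard $L_A\otimes L_B$, factor via~(\ref{eq:disccomp}), apply each leak's counit law to get the identity on its own system, and recombine --- but it is a distinct case and it is the one that matters for showing that composite systems $A\otimes B$ inherit leaks from their components (which the paper relies on later, e.g.\ in the coherence condition~(\ref{eq:compositepreleak}) of the leak construction). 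Your proposal as written proves closure of leaks on a fixed system under sequential stacking, but not closure of the collection of leaks across the monoidal structure of the theory, so it establishes only half of the statement.
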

\proof
Sequential composition of leaks is again a leak:
\[%
\InputIfFileExists{Diagrams/leakComposition1.tikz}{}{\input{./figures/Diagrams/leakComposition1.tikz}}
\]
since we have:
\[
\InputIfFileExists{Diagrams/leaksCompose2.tikz}{}{\input{./figures/Diagrams/leaksCompose2.tikz}}
\]
and the same goes for parallel composition:
\[%
\InputIfFileExists{Diagrams/leaksCompose3.tikz}{}{\input{./figures/Diagrams/leaksCompose3.tikz}}\]
since we have:
\[%
\InputIfFileExists{Diagrams/leaksCompose4.tikz}{}{\input{./figures/Diagrams/leaksCompose4.tikz}}\]
\endproof

For classical probability theory, copying of support elements provides a leak:
\[
\InputIfFileExists{Diagrams/broadcast.tikz}{}{\input{./figures/Diagrams/broadcast.tikz}}\ \ : X\to X \times X:: x\mapsto (x, x)
\]
since if we discard a copy, we are back with what we started off with. In fact, strictly speaking, what we are dealing with here is not a copying operation since while it copies pure classical states, it does not do that for impure ones. What it is instead is broadcasting, that is besides Equation (\ref{eq:leakdisc1}), discarding is also a left counit for the leaking process:
\beq\label{eq:leakdisc2}
\InputIfFileExists{Diagrams/leakdisc2.tikz}{}{\input{./figures/Diagrams/leakdisc2.tikz}}
\eeq
 Note that this requires $L:=A$ in Equation (\ref{eq:leakL}). This is the {maximal} possible leak for any system, as all of the information about the ingoing state is leaked out.

On the other hand, quantum theory does not allow for broadcasting \cite{Nobroadcast}. In fact, the only kind of leak quantum theory admits is constant leaking. This immediately follows from the following fact about quantum processes, which states that any dilation of a~pure process, i.e.,~representation as a~process with an extra output that is discarded, must separate:

\begin{proposition}\label{prop:reducedprocpure}
For pure quantum processes $f$, we have:
\beq\label{eq:PurityOfProc}
\begin{tikzpicture}
	\begin{pgfonlayer}{nodelayer}
		\node [style=box] (0) at (0, 0) {$f$};
		\node [style=none] (1) at (0, 1.25) {};
		\node [style=none] (2) at (0, -1.25) {};
	\end{pgfonlayer}
	\begin{pgfonlayer}{edgelayer}
		\draw (1.center) to (0); 
		\draw (0) to (2.center); 
	\end{pgfonlayer}
\end{tikzpicture}}\ \ =\ \ %
\InputIfFileExists{Diagrams/pure1.tikz}{}{\input{./figures/Diagrams/pure1.tikz}}\qquad \Longrightarrow\qquad %
\InputIfFileExists{Diagrams/pure2.tikz}{}{\input{./figures/Diagrams/pure2.tikz}}\ \ = \ \ %
\InputIfFileExists{Diagrams/pure4.tikz}{}{\input{./figures/Diagrams/pure4.tikz}}
\eeq
with $\rho$ causal. That is, if a reduced process $f$ is pure, then the process $g$ we started from must separate.
\end{proposition}
\begin{proof}
See, e.g.,~\cite{CKpaperI, CKBook}.
\end{proof}

Hence, since the identity is pure, by Proposition \ref{prop:reducedprocpure}, it follows from the defining equation of a leak (\ref{eq:leakdisc1}) that any leak for quantum theory must be constant, that is of the form:
\beq\label{eq:cteleak1}
\begin{tikzpicture}
	\begin{pgfonlayer}{nodelayer}
		\node [style=none] (0) at (-0.5, -1) {};
		\node [style=none] (1) at (-0.5, 1) {};
		\node [style=point] (2) at (0.5, 0) {$\rho$};
		\node [style=none] (3) at (0.5, 1) {};
	\end{pgfonlayer}
	\begin{pgfonlayer}{edgelayer}
		\draw [style=swap] (1.center) to (0.center);      
		\draw (3.center) to (2);
	\end{pgfonlayer}
\end{tikzpicture}}
\eeq
where we need to take the state to be causal:
\beq\label{eq:cteleak2}
\InputIfFileExists{Diagrams/cteleak2.tikz}{}{\input{./figures/Diagrams/cteleak2.tikz}}
\eeq

\begin{remark}\label{Rem:QuantumPure}
 In quantum theory, Proposition~\ref{prop:reducedprocpure} can actually be taken as a definition of the purity of processes, that is a quantum process $f$ is pure if and only if all dilations of $f$ separate. However, in theories with non-constant leaks, this definition must be revised as we discuss in detail in Section~\ref{Sec:ProcessPurity}.
 \end{remark}

Of course, (\ref{eq:cteleak1}) is also a leak for classical probability theory, and another example arises by combining broadcasting and a constant:
\beq\label{eq:broadcastANDcte}
\InputIfFileExists{Diagrams/broadcastANDcte.tikz}{}{\input{./figures/Diagrams/broadcastANDcte.tikz}}
\eeq

 At least qualitatively, quantum theory can therefore be described as a {minimally-leaking} theory, as all leaks are constant leaks, whilst classical theory is {maximally leaking}, as for each system, there is a~maximal leak. We will now provide qualitative substance to this claim.

\section{Quality of a Leak}

 For the sake of simplicity of the argument, we will restrict ourselves to a special kind of process theories that admit the notion of a feedback wire. Explicitly spelling out the process-theoretic characterisation of a feedback wire as in \cite{JSV} goes beyond the scope of this paper. It suffices to know that they exist in both quantum and classical theory, where they can be constructed in the obvious way using the cups and caps of Examples \ref{Ex:NonCausalQ} and \ref{Ex:NonCausalCl}. The behaviour of such a feedback wire is that of a wire of the shape:
\[
\InputIfFileExists{Diagrams/trace.tikz}{}{\input{./figures/Diagrams/trace.tikz}}
\]
for which we have the obvious equations, such as:
\[
\InputIfFileExists{Diagrams/traceeq.tikz}{}{\input{./figures/Diagrams/traceeq.tikz}}
\]
In particular, by means of such a wire, we can feed an output of a process back into it as an input:
\[
\InputIfFileExists{Diagrams/trace_2.tikz}{}{\input{./figures/Diagrams/trace_2.tikz}}
\]
i.e.,~we create a feedback-loop.

Feedback-loops allow us to ask questions, such as how closely will some outgoing data match an~ingoing data. In particular, for the case of leaks where $L=A$, we can measure how closely the leaked data matches the original (while ignoring the output) via the following diagram:
 \ctikzfig{Diagrams/faithfulness}
However, what tends to be more useful, particularly in the case where $L\neq A$, is not asking precisely how well does the outgoing data match the ingoing, but how well does the outgoing data {encode} the ingoing data. For example, all of the information could be there just scrambled up or encoded in some other system type. We therefore want to consider maximising over potential {restoration} maps $r:L\to A$, where $r$ is taken to be causal. We call this notion the {quality} of a leak:
\[
\quality:=\textsf{Max}_r\left[\ %
\InputIfFileExists{Diagrams/quality.tikz}{}{\input{./figures/Diagrams/quality.tikz}}\ \right]
\]

If the structure of the numbers in a process theory is sufficiently rich, e.g.,~they are the real numbers or probabilities, one can moreover renormalise this quantity as follows:
\beq\label{eq:faithfulnessNORM}
\begin{tikzpicture}
	\begin{pgfonlayer}{nodelayer}
		\node [style=none] (0) at (2, 0.82500001) {$-$};
		\node [style={empty diagram}] (1) at (3.75, 0.82500001) {};
		\node [style=none] (2) at (0, 0.87500001) {$\quality$};
		\node [style=none] (3) at (2, 0) {};
	\end{pgfonlayer}
\end{tikzpicture}
}
\over
\InputIfFileExists{Diagrams/renormQuality1.tikz}{}{\input{./figures/Diagrams/renormQuality1.tikz}}
\eeq
where the circle indicates the feedback-loop applied to the identity. As a leak, the quality of broadcasting is one, since we have:
\[
\InputIfFileExists{Diagrams/faithfulness.tikz}{}{\input{./figures/Diagrams/faithfulness.tikz}}\ \ \stackrel{(\ref{eq:leakdisc2})}{=} \ \ %
\InputIfFileExists{Diagrams/circle.tikz}{}{\input{./figures/Diagrams/circle.tikz}}
\]
while for constant leaks, it is zero, since we have:
\[
\InputIfFileExists{Diagrams/faithfulnesscte1.tikz}{}{\input{./figures/Diagrams/faithfulnesscte1.tikz}}
\ \ =
\ \
\begin{tikzpicture}
	\begin{pgfonlayer}{nodelayer}
		\node [style=point] (0) at (0, -0.5) {$\rho$};
		\node [style=none] (1) at (0, 0.25) {};
		\node [style=upground] (2) at (0, 0.5) {};
	\end{pgfonlayer}
	\begin{pgfonlayer}{edgelayer}
		\draw (1.center) to (0);
	\end{pgfonlayer}
\end{tikzpicture}}
\ \ \stackrel{(\ref{eq:cteleak2})}{=} \ \
\emptydiag
\]

We therefore see that quantum theory is a minimally leaking theory as the renormalised quality for any leak is zero, whilst classical theory is maximal as every system has a leak with renormalised quality of one. In the next section, we consider how to increase the amount of leaking for a theory, providing a process-theoretic perspective on the quantum to classical transition.

\begin{example}\label{ex:continuous}
If a process theory admits sums (cf.~\cite{CKBook} or Appendix~\ref{mathsAppendix}), then set:
\[
\InputIfFileExists{Diagrams/c-leak.tikz}{}{\input{./figures/Diagrams/c-leak.tikz}}\ \ := \ \ c \ \ %
\InputIfFileExists{Diagrams/broadcast.tikz}{}{\input{./figures/Diagrams/broadcast.tikz}}\ \ + q \ \ %
}
\]
with $c + q = 1$. Now, quality in the form Equation (\ref{eq:faithfulnessNORM}) is $c$.
\end{example}

\section{A Representation for All Classical-Quantum Leaks}

We already characterised all quantum leaks as being constant leaks; we next characterise all classical leaks.

\begin{proposition}
All classical leaks are of the form:
\beq\label{eq:ClassicalLeak}
\InputIfFileExists{Diagrams/classicalLeak.tikz}{}{\input{./figures/Diagrams/classicalLeak.tikz}}
\eeq
where $l$ is any causal classical process.
\end{proposition}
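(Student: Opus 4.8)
The plan is to work in the concrete description of classical theory from Example~\ref{ex:classtheory}, exploiting the distinguished basis to read off the matrix of an arbitrary leak directly from its defining equation~(\ref{eq:leakdisc1}). The strategy is to show that the $A$-output of any classical leak is forced to reproduce the input, so that the only surviving freedom is a conditional distribution on $L$, which is precisely a causal classical process $l:A\to L$; reading this back diagrammatically says $\Lambda$ is broadcasting with $l$ post-composed on the leaked wire, which is exactly the form~(\ref{eq:ClassicalLeak}).

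First I would expand on basis states. Writing the input as $A=\{1,\dots,n\}$ and recalling that every classical process is a matrix with non-negative entries (so in particular the causal leak $\Lambda:A\to A\otimes L$ is stochastic), the state $\Lambda\,\ket{a}$ on $A\otimes L$ decomposes over the product basis as
\beq
\Lambda\,\ket{a} \ \ = \ \ \smallsumi{a',k}\, c^{a}_{a',k}\,\ket{a'}\otimes\ket{k},\qquad c^{a}_{a',k}\geq 0 .
\eeq
Since discarding on $L$ is marginalisation, i.e.\ the effect $\smallsumi{k}\bra{k}$, imposing the leak equation~(\ref{eq:leakdisc1}) sends this state to $\smallsumi{a'}\big(\smallsumi{k} c^{a}_{a',k}\big)\ket{a'}$ and demands that it equal $\ket{a}$; by linear independence of the basis states this is exactly the system of scalar equations $\smallsumi{k} c^{a}_{a',k}=\delta_{a,a'}$.

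The key step---and the point where \emph{positivity} of classical theory is indispensable---is that for $a'\neq a$ a family of non-negative numbers $c^{a}_{a',k}$ summing to $0$ must vanish term by term, so $c^{a}_{a',k}=\delta_{a,a'}\,p_a(k)$ with $\smallsumi{k}p_a(k)=1$. Hence $\Lambda\,\ket{a}=\ket{a}\otimes\big(\smallsumi{k}p_a(k)\ket{k}\big)$: the $A$-output copies the input while the $L$-output is produced by the stochastic map $l:A\to L$, $l\,\ket{a}:=\smallsumi{k}p_a(k)\ket{k}$, which is causal precisely because $\smallsumi{k}p_a(k)=1$. Reading this back yields~(\ref{eq:ClassicalLeak}), and the converse direction is immediate, since discarding the output of a causal $l$ returns the bare broadcasting map, whose leaked branch discards to the identity. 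I expect the only real obstacle to be this positivity argument: it is exactly what is unavailable in the quantum setting, where off-diagonal contributions may cancel and Proposition~\ref{prop:reducedprocpure} instead forces leaks to be constant, so the proof must invoke non-negativity explicitly rather than relying on a purely linear-algebraic manipulation that would apply equally to quantum theory.
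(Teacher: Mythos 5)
Your proof is correct and follows essentially the same route as the paper's: both work in the matrix representation of classical theory, use the leak equation~(\ref{eq:leakdisc1}) to get $\smallsumi{k} c^{a}_{a',k}=\delta_{a,a'}$, and then invoke non-negativity of the entries to kill the off-diagonal terms, leaving a conditional distribution that defines the causal process $l$. The only cosmetic difference is that the paper extracts $l$ diagrammatically by composing the leak with the classical cap (which is exactly your diagonal extraction $p_a(k):=c^{a}_{a,k}$), whereas you read it off directly in bra-ket notation.
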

\begin{proof}
First, let us define, using the non-causal ``cap'' of Example \ref{Ex:NonCausalCl}:
\[%
\InputIfFileExists{Diagrams/leakMap.tikz}{}{\input{./figures/Diagrams/leakMap.tikz}}\]
Despite the fact that this is defined using a non-causal process, the composite process $l$ is actually~causal:
\[%
\InputIfFileExists{Diagrams/leakMapIsCausal.tikz}{}{\input{./figures/Diagrams/leakMapIsCausal.tikz}}\]
We can then use the matrix representation of the leak (see Appendix~\ref{mathsAppendix}):
\[%
\InputIfFileExists{Diagrams/leakMatrix.tikz}{}{\input{./figures/Diagrams/leakMatrix.tikz}}\]
where $\Delta^{ij}_{k}\in \mathbb{R}^+$. The leak condition then implies that:
\[
\sum_j \Delta^{ij}_k=\delta^{i}_{k}
\]
and so:
\[
\Delta^{ij}_{k}=\Delta^{ij}_{k}\delta^{i}_{k}
\]
Then, we can check that Equation (\ref{eq:ClassicalLeak}) is indeed satisfied:
\[
\InputIfFileExists{Diagrams/ClassicalLeakFormProof.tikz}{}{\input{./figures/Diagrams/ClassicalLeakFormProof.tikz}}
\]
\end{proof}

We can now also characterise all leaks for composite classical-quantum systems:

\begin{proposition}
Denoting the classical system by a dotted line and the quantum system by a solid line; all~composite classical quantum systems have leaks of the form:
\beq%
\InputIfFileExists{Diagrams/classQuantLeakNew.tikz}{}{\input{./figures/Diagrams/classQuantLeakNew.tikz}}
\eeq
where $L$ is any causal process from classical to quantum systems.
\end{proposition}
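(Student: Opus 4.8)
The plan is to verify the defining leak equation (\ref{eq:leakdisc1}) directly for the displayed diagram, for an arbitrary causal process $L$: since a process is a leak precisely when discarding its leaked output returns the identity on its main system, this is the only thing that needs checking. The diagram is assembled from three pieces: the classical copy (broadcast) map acting on the dotted classical wire, the solid quantum wire running straight through untouched, and the causal process $L$ fed by a second copy of the classical data, whose quantum output constitutes the leaked system.

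First I would discard the leaked output. Because that output is produced by $L$ acting on a copy of the classical wire, causality of $L$, namely Equation (\ref{eq:term}), lets me push the discarding effect through $L$ and collapse the entire $L$-branch to a bare discarding effect on its incoming copy of the classical data; here I also use that discarding effects factor across the tensor product, Equation (\ref{eq:disccomp}), so that discarding the composite leaked system reduces to discarding this single wire. What then remains is the classical copy map with one of its two outputs discarded. By the broadcasting property of classical copy, namely that discarding either of its outputs recovers the input (Equations (\ref{eq:leakdisc1}) and (\ref{eq:leakdisc2}) applied to the copy map, expressing that copy is the maximal leak), this branch reduces to the identity on the classical wire. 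The quantum wire was never altered and so contributes the identity on the quantum system. Composing the two, discarding the leaked output returns the identity on the whole classical-quantum system, which is exactly Equation (\ref{eq:leakdisc1}); hence the diagram is a leak for every causal $L$.

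I expect no genuine obstacle: the argument is a two-step reduction built from tools already in hand, causality of $L$ followed by the counit property of classical copy, applied in that order. The only thing demanding care is the bookkeeping of the two classical copies together with the cross-term $L$, and the observation that the quantum subsystem needs no separate check because the leak structure touches it only trivially (it passes straight through), with all the non-trivial leaking confined to the classical wire and to the classical-controlled quantum output $L$.
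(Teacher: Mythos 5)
There is a genuine gap, and it is one of direction rather than of technique. The proposition is a \emph{characterisation}: the sentence introducing it in the paper reads ``We can now also characterise all leaks for composite classical-quantum systems,'' and, exactly as with the preceding classical result (\ref{eq:ClassicalLeak}), the content is that \emph{every} leak on a classical-quantum composite must have the displayed copy-plus-$L$ form. Your argument establishes only the converse, easy direction: that any diagram of that form satisfies the defining equation (\ref{eq:leakdisc1}). That verification is correct as far as it goes --- discarding the leaked output collapses the $L$-branch by causality (\ref{eq:term}) and (\ref{eq:disccomp}), after which the counit property of the copy map (\ref{eq:leakdisc1})/(\ref{eq:leakdisc2}) restores the classical wire --- but it is essentially a sanity check on the statement, not its proof. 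Your opening claim that ``this is the only thing that needs checking'' is precisely where the proposal goes wrong.

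The paper's proof runs in the opposite direction and needs two nontrivial inputs that your proposal never touches. Starting from an \emph{arbitrary} leak on the composite system, one first builds from it a quantum leak (by feeding the classical input with one half of the normalised classical cup of Example \ref{Ex:NonCausalCl} and absorbing the classical output and the cup's other leg into the leaked system). Since all quantum leaks are constant --- this is the consequence of Proposition \ref{prop:reducedprocpure}, purity of the identity, recorded in (\ref{eq:cteleak1}) --- the composite leak is forced to separate as the identity on the quantum wire tensored with a process $\rho$ on the classical side. One then checks that $\rho$ defines a classical leak and invokes the classical characterisation (\ref{eq:ClassicalLeak}) to write it as copy followed by a causal $l$, which assembles into the stated form with $L$ causal. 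Without this ``only if'' direction the proposition is useless for its later applications: the characterisation of pure processes on classical-quantum composites, and the separability of pure quantum-to-classical maps, both quantify over \emph{all} leaks and so rely on knowing that no leaks other than those of the displayed form exist. To repair your proposal you would need to add exactly this reduction: composite leak $\Rightarrow$ induced quantum leak $\Rightarrow$ separation via constancy of quantum leaks $\Rightarrow$ residual classical leak $\Rightarrow$ the form (\ref{eq:ClassicalLeak}).
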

\begin{proof}
Note that any composite leak defines a quantum leak as:
\[{1\over D}\ \ %
\InputIfFileExists{Diagrams/QuantumLeakFromComposite.tikz}{}{\input{./figures/Diagrams/QuantumLeakFromComposite.tikz}}\]
and therefore, as we know all quantum leaks separate:
\[%
\InputIfFileExists{Diagrams/compLeakProof.tikz}{}{\input{./figures/Diagrams/compLeakProof.tikz}}\]
where $\rho$ defines a classical leak as:
\[%
\InputIfFileExists{Diagrams/compLeakProof2.tikz}{}{\input{./figures/Diagrams/compLeakProof2.tikz}}\]
and so putting this together, we have:
\[%
\InputIfFileExists{Diagrams/compLeakProof3.tikz}{}{\input{./figures/Diagrams/compLeakProof3.tikz}}\]
\end{proof}

The bottom line is that all of these leaks involve the copying leak as the fundamental ingredient. This is not all too surprising, since, as we showed in the previous section, it stands for maximal leakage. The processes $l$ and $L$ then play the role of reducing the leakage, with as the extremal cases $l$ and $L$ being constant, producing a constant leak.

\section{The Leak-Construction}

We now show how one can construct new process theories from old ones by introducing leaks. This is done by inserting particular processes of the old theory of the form (\ref{eq:decoher}) on all of the wires. The~processes (\ref{eq:plainleak}), to which we refer in the old theory as pre-leaks, then become leaks in the new theory. Hence, the leak construction turns pre-leaks into leaks.

\begin{thm}
\label{thm:leak-cons}
Given any process theory and for each system a {causal} process:
\beq\label{eq:plainleak}
\InputIfFileExists{Diagrams/plainpreleak.tikz}{}{\input{./figures/Diagrams/plainpreleak.tikz}}
\eeq
which is such that the following process is {idempotent}:
\beq\label{eq:decoher}
\InputIfFileExists{Diagrams/decoherPreLeak.tikz}{}{\input{./figures/Diagrams/decoherPreLeak.tikz}}
\eeq
and which are chosen {coherently} for composite systems:
\beq\label{eq:compositepreleak}
\InputIfFileExists{Diagrams/compositepreleakLHS.tikz}{}{\input{./figures/Diagrams/compositepreleakLHS.tikz}} \ \ :=\ \ %
\InputIfFileExists{Diagrams/compositepreleakRHS.tikz}{}{\input{./figures/Diagrams/compositepreleakRHS.tikz}}
\eeq
we can construct a new process theory in which each process (\ref{eq:plainleak}) is a leak for the system $A$. This construction goes as follows:
\begin{itemize}
\item systems stay the same;
\item one restricts processes to those of the form:
\beq\label{eq:leak-cons}
\InputIfFileExists{Diagrams/leak-cons.tikz}{}{\input{./figures/Diagrams/leak-cons.tikz}}
\eeq
\end{itemize}
\end{thm}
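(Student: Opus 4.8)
The plan is to verify two things: first, that the prescribed class of dressed processes (\ref{eq:leak-cons}) genuinely assembles into a process theory with discarding, and second, that within this theory each pre-leak (\ref{eq:plainleak}) satisfies the leak equation (\ref{eq:leakdisc1}). Throughout I would write $D_A$ for the idempotent (\ref{eq:decoher}), i.e.\ the pre-leak on $A$ with its leaked output discarded, so that a morphism of the new theory is precisely an old process $f\colon A\to B$ presented in the dressed form $D_B\circ f\circ D_A$ (decoherence maps inserted on all wires).

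First I would establish well-definedness, the central point being that $D_A$ must play the role of the \emph{identity} on $A$ rather than the old bare wire. Since $D_A$ is idempotent, the dressed bare wire $D_A\circ\mathrm{id}_A\circ D_A$ equals $D_A$; and for any dressed $g=D_B\circ g\circ D_A$, left-multiplying this fixed-point equation by $D_A$ gives $D_A\circ g=g$ (symmetrically $g\circ D_A=g$), so $D_A$ acts neutrally. Closure under sequential composition is then immediate, as the two decoherence maps meeting when boxes are stacked collapse via $D_B\circ D_B=D_B$; closure under parallel composition uses exactly the coherence hypothesis (\ref{eq:compositepreleak}), which ensures $D_{A\otimes B}=D_A\otimes D_B$ so that tensoring two dressed processes is again dressed. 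Associativity, interchange, and symmetry are inherited from the old theory, with naturality of the swap letting the decoherence maps slide past it; these require only routine checking.

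Next I would pin down the discarding structure. The natural candidate for the new discarding effect is the old one, and it survives the restriction: because each pre-leak is causal (\ref{eq:term}), discarding \emph{both} of its outputs equals discarding its input, whence $\mathrm{disc}_A\circ D_A=\mathrm{disc}_A$. Thus the old discarding is already a fixed point on the right, so it is a legitimate effect of the constructed theory and is compatible with the new identity $D_A$.

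Finally, with the new identity identified as $D_A$, the leak property is essentially forced. Taking the pre-leak dressed to live in the new theory and discarding its leaked output, I would use causality of the decoherence map on the leak wire ($\mathrm{disc}\circ D_L=\mathrm{disc}$) to absorb that dressing; the composite then reduces to $(\mathrm{id}_A\otimes\mathrm{disc})\circ(\text{pre-leak})$ flanked by $D_A$ on each side, which by the defining equation (\ref{eq:decoher}) is $D_A\circ D_A\circ D_A=D_A$, exactly the new identity. This is precisely (\ref{eq:leakdisc1}). I expect the main obstacle to be bookkeeping rather than conceptual: fixing the dressing conventions so the decoherence maps inserted ``on all wires'' compose consistently, and above all confirming that the identity of the constructed theory is $D_A$ and not the old bare identity. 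Once that is in place, everything follows mechanically from idempotence (\ref{eq:decoher}), causality (\ref{eq:term}), and coherence (\ref{eq:compositepreleak}).
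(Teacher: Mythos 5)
Your proposal is correct and follows essentially the same route as the paper's proof: identify the new identity wires with the idempotent $D_A$, note that discarding survives by causality of the pre-leak, and verify the leak equation for the dressed pre-leak by absorbing the $D_L$ dressing via causality and then collapsing $D_A\circ D_A\circ D_A=D_A$ by idempotence, with the coherence condition handling the composite output. The only differences are cosmetic: you spell out the closure/well-definedness checks that the paper dismisses as straightforward (and have a harmless typo, $D_A\circ g=g$ where $D_B\circ g=g$ is meant for $g\colon A\to B$).
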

\begin{proof}
By causality of (\ref{eq:plainleak}):
\beq\label{preleakCausality}
\InputIfFileExists{Diagrams/causalPreLeak.tikz}{}{\input{./figures/Diagrams/causalPreLeak.tikz}}
\eeq
discarding is preserved by the leak-construction. Given the form Equation (\ref{eq:leak-cons}) of the processes in the theory and due to the idempotence of Equation (\ref{eq:decoher}), plain wires have taken the form Equation (\ref{eq:decoher}), so the defining equation of a leak Equation (\ref{eq:leakdisc1}) is satisfied. To consider the pre-leak in the new theory, we must apply the leak construction Equation (\ref{eq:leak-cons}), and using the condition for composites Equation (\ref{eq:compositepreleak}), we get the following process in the new theory:
\[
\InputIfFileExists{Diagrams/preleakToLeak2.tikz}{}{\input{./figures/Diagrams/preleakToLeak2.tikz}}
\]
which is indeed a leak in the new theory:
\[
\InputIfFileExists{Diagrams/preleakToLeak3.tikz}{}{\input{./figures/Diagrams/preleakToLeak3.tikz}}\ \
\stackrel{(\ref{preleakCausality})}{=} \ \
\InputIfFileExists{Diagrams/preleakToLeak4.tikz}{}{\input{./figures/Diagrams/preleakToLeak4.tikz}} \ \
\stackrel{(\ref{eq:decoher})}{=} \ \
\InputIfFileExists{Diagrams/preleakToLeak5.tikz}{}{\input{./figures/Diagrams/preleakToLeak5.tikz}}
\]
which is the form of a plain wire in the new theory,
and so, this construction does turn pre-leaks into leaks.
 It is moreover straightforward to see that we again obtain a process theory.
\end{proof}

Sometimes the leak-construction does nothing, in particular, when the pre-leaks are already leaks:

\begin{example}[Trivial]
A simple example of the leak construction is the one where the pre-leaks are taken to already be leaks, since then (\ref{eq:leak-cons}) will reduce to the processes $f$ themselves.
\end{example}

The main motivating example for this construction is of course the following:

\begin{example}[Decoherence]\label{ex:decoher-leak}
The leak construction for the pre-leak:
\[
\InputIfFileExists{Diagrams/broadcast.tikz}{}{\input{./figures/Diagrams/broadcast.tikz}}\ \ : {\cal B(H)}\to {\cal B(H\otimes H)}:: |i\rangle\langle i|\mapsto |i\rangle\langle i|\otimes |i\rangle\langle i|
\]
applied to the process theory of quantum processes (i.e., Example \ref{ex:quantumtheory}), we obtain classical probability theory (i.e.,~Example \ref{ex:classtheory}).
\end{example}

In the above construction, it is really the idempotents rather than the specific pre-leaks that determine the theory that is obtained. We can therefore have several different perspectives on the ``cause'' of this idempotent, by considering different pre-leaks from which it could be obtained. Firstly, we can always take the trivial case, where the pre-leak is just the idempotent itself, i.e., taking the leaked system as the empty system. There are however three alternate forms that always exist in quantum theory and that are more insightful.

\begin{example}
\label{ex:PreLeakCauses}
Firstly we can consider the {purification} $f$ of the idempotent, in the sense of \cite{Chiri1}:
\[%
\InputIfFileExists{Diagrams/preleakAsPurification.tikz}{}{\input{./figures/Diagrams/preleakAsPurification.tikz}}\]
This corresponds to the idea that information can never be fundamentally destroyed, only discarded, and so, we can see this leaking of information into some causally-separated system leading to decoherence. Another standard way to represent a general process is, via Stinespring dilation \cite{Stinespring}, as a reversible interaction with an environment:
\[%
\InputIfFileExists{Diagrams/preleakAsUnitarification.tikz}{}{\input{./figures/Diagrams/preleakAsUnitarification.tikz}}\]
and so, we can equivalently view decoherence as arising due to a reversible interaction with some uncontrolled environment \cite{zurek2009quantum}. A final example, suggested to us by Rob Spekkens, is that the idempotent can be viewed as describing a system that lacks a reference frame \cite{bartlett2007reference}; the leaked system would then correspond to the reference system itself. This is the subject of ongoing work and is discussed in the Conclusion.
\end{example}

Example \ref{ex:decoher-leak} leaves open the question whether there are any theories that can be obtained from this leak construction in between classical and quantum theory. This question is solved in a forthcoming paper where the key result is the following theorem:

\begin{thm}
The leak construction applied to quantum processes (i.e., Example \ref{ex:quantumtheory}) gives all C*-algebras and C*-algebras only.
\end{thm}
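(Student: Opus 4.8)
The statement is a biconditional, so the plan is to prove two inclusions: that \emph{every} finite-dimensional C*-algebra is realised by some instance of the leak construction of Theorem~\ref{thm:leak-cons} applied to quantum theory (Example~\ref{ex:quantumtheory}), and that \emph{only} such algebras can arise. Both directions turn on a single observation: for quantum theory the decohering idempotent~(\ref{eq:decoher}) associated to a pre-leak is a completely positive, trace-preserving, idempotent map $P_A\colon \mathcal{B}(\mathcal{H}_A)\to\mathcal{B}(\mathcal{H}_A)$, and the coherence condition~(\ref{eq:compositepreleak}) forces $P_{A\otimes B}=P_A\otimes P_B$. Dualising with respect to the trace pairing (finite dimensions), each $P_A$ corresponds to a \emph{unital} completely positive idempotent, and I would invoke the Choi--Effros theorem together with the idempotent-splitting analyses of \cite{SelingerIdempotent,heunen2013completely,cunningham2015axiomatizing}: the image of a unital CP idempotent, equipped with the twisted product $a\cdot b:=P(ab)$, is a finite-dimensional C*-algebra, and conversely every such algebra arises this way.

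For the ``all C*-algebras'' inclusion I would argue constructively. By the structure theorem, every finite-dimensional C*-algebra is isomorphic to a block algebra $\bigoplus_i M_{n_i}(\mathbb{C})$, so I fix $\mathcal{H}=\bigoplus_i\mathbb{C}^{n_i}$ with block projections $\Pi_i$ and take the block-pinching channel $P(\rho)=\sum_i\Pi_i\rho\,\Pi_i$. This is trace-preserving, CP and idempotent, and its image is exactly $\bigoplus_i M_{n_i}(\mathbb{C})$. It is moreover of decoherence type: the pre-leak $\rho\mapsto\sum_i(\Pi_i\rho\,\Pi_i)\otimes\ketbra{i}{i}$, leaking the block label into a classical register, is causal and recovers $P$ upon discarding the register, so the hypotheses of Theorem~\ref{thm:leak-cons} are met. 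Choosing these pre-leaks on composites to be the tensor products of the factors satisfies the coherence condition~(\ref{eq:compositepreleak}) on the nose, and one checks that the resulting decohered theory is precisely the process theory of $\bigoplus_i M_{n_i}(\mathbb{C})$, with the broadcasting Example~\ref{ex:decoher-leak} recovered as the special case of all blocks one-dimensional.

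For the ``only C*-algebras'' inclusion I would start from an arbitrary coherent family $\{P_A\}$ of decohering idempotents and show the decohered theory is equivalent, as a symmetric monoidal category, to the process theory of finite-dimensional C*-algebras and their completely positive maps. The Choi--Effros theorem supplies a C*-algebra structure on each image $\mathrm{Im}(P_A)$; the morphisms of the decohered theory are those of the form~(\ref{eq:leak-cons}), namely quantum processes sandwiched between idempotents $P_B\circ\xi\circ P_A$ with $P_A$ serving as the identity on $A$. I would identify this data with the splitting of the idempotents $\{P_A\}$ in the category of quantum processes, and hence (via the cited results) with exactly the CP maps between the Choi--Effros algebras. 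Finally, $P_{A\otimes B}=P_A\otimes P_B$ identifies the monoidal product with the spatial tensor product of C*-algebras, closing the equivalence.

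The main obstacle lies in the ``only'' direction. Granting that the image of each decohering idempotent is a finite-dimensional C*-algebra (the ambient involution restricts to the image because CP maps preserve adjoints, and the multiplication is the twisted $P(ab)$), the real work is to upgrade this object-level statement to a monoidal equivalence of process theories. Concretely, (i) the morphisms $P_B\circ\xi\circ P_A$ with identities $P_A$ must be shown to be \emph{precisely} the CP maps between the Choi--Effros algebras, which amounts to carefully splitting the idempotents in the category of quantum processes, and (ii) the coherence condition must be shown to identify the monoidal product with the spatial tensor product of C*-algebras. The delicate point throughout is reconciling the twisted multiplication $P(ab)$ and its induced involution with the native composition and causal structure of the process theory, and verifying that no condition beyond idempotence and coherence (for instance, self-adjointness of the idempotent) is silently required; this bookkeeping, and the comparison with \cite{SelingerIdempotent,heunen2013completely,cunningham2015axiomatizing}, is exactly what the follow-up \cite{CST} carries out.
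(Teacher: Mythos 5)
There is nothing in the paper for your proposal to be compared against: the theorem is stated without proof, explicitly deferred to the forthcoming paper \cite{CST} (``This question is solved in a forthcoming paper where the key result is the following theorem''), so within this paper the ``proof'' is a citation. Judged on its own merits, your outline is the right one, and it is essentially the strategy the follow-up paper executes: the idempotents produced by pre-leaks in quantum theory are exactly the CPTP idempotents on the spaces $\mathcal{B}(\mathcal{H})$ (conversely every CPTP idempotent arises from a pre-leak, e.g.\ via its Stinespring dilation, cf.\ Example~\ref{ex:PreLeakCauses}, so the ``only'' direction is correctly posed over \emph{all} coherent families of such idempotents); dualising to unital CP idempotents and applying Choi--Effros identifies each split system with a finite-dimensional C*-algebra; your block-pinching pre-leaks realise every $\bigoplus_i M_{n_i}(\mathbb{C})$, with decoherence (Example~\ref{ex:decoher-leak}) as the all-blocks-one-dimensional case; and the passage to processes of the form (\ref{eq:leak-cons}) is the splitting of these idempotents, i.e.\ the restricted Karoubi envelope that the paper's own final Remark points to. Two caveats. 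First, your proposal, like the paper, ultimately defers the load-bearing step---that the sandwiched processes $P_B\circ\xi\circ P_A$ are \emph{precisely} the CP maps between the Choi--Effros algebras, and that coherence (\ref{eq:compositepreleak}) yields the spatial tensor---to \cite{CST}, so what you have is a correct road map rather than a self-contained argument. Second, in finite dimensions you could close that step more concretely than by citing Choi--Effros: the structure theorem for CPTP idempotents gives $\mathrm{Im}(P)=\bigoplus_i \mathcal{B}(\mathcal{H}_i)\otimes\omega_i$ for fixed states $\omega_i$ supported on a decomposition of (a subspace of) $\mathcal{H}$, which exhibits each split system explicitly as $\bigoplus_i\mathcal{B}(\mathcal{H}_i)$ and reduces both the morphism-level and the monoidal identifications to a finite computation, avoiding any worry about hidden assumptions (such as self-adjointness of the idempotent) entering silently.
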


Therefore, despite the weak structure of a leak, for the specific case of quantum theory, we obtain precisely the C*-algebras via the leak construction. This leads one to contemplate the view that the operational essence of (finite dimensional) C*-algebras is entirely captured by leaks and that the additional structure of C*-algebras is merely an artefact of the Hilbert space representation.

\begin{remark}
The leak-construction does not apply to Example \ref{ex:continuous}, since only for $c=0,1$, we have idempotence of~(\ref{eq:decoher}).
\end{remark}

\begin{remark}
For a process theory in which all systems are compositions $A^{\otimes n}$ of one atomic system $A$, it suffices to pick a single process (\ref{eq:plainleak}) for the system $A$ (where $L_A$ will be of the form $A^{\otimes n}$, since all other such processes arise then by coherence (\ref{eq:compositepreleak}).
\end{remark}

\begin{remark}
If a pre-leak with $L:= A$ is co-associative:
\ctikzfig{Diagrams/coassoc}
then the idempotence of (\ref{eq:decoher}) follows from causality of the pre-leak.
\end{remark}

\begin{remark}
The construction in Theorem \ref{thm:leak-cons}, when modified by not fixing a pre-leak for each type, but rather considering all pairs of a system and a corresponding pre-leak, is known as the Karoubi envelope, or Cauchy completion, or splitting of idempotents. More details on this can be found in \cite{CST}.
\end{remark}

\section{Process-Purity from Leaks} \label{Sec:ProcessPurity}

In this section, we consider how leaks relate to purity in process theories. The purity (or lack of purity) of a state is a fundamental concept in quantum theory and is equally important in most approaches to generalised physical theories. However, there is no reason to consider this as solely a~property for states, but should be considered for all processes in a theory. Indeed, lack of knowledge about a process, the noisiness of a channel and detection errors on a POVM-element all correspond to process-impurities. We will show that defining such a property for general theories, and classical theory in particular, requires leaks.

In Reference \cite{chiribella2016quantum}, Chiribella et al. introduce the notion of side-information; this can be thought of as information that is lost during a process that, in principle, could be possessed by some other agent. The use of this in cryptographic scenarios is clear, where the side-information can be thought of as being possessed by an eavesdropper attempting to influence or gain information about some cryptographic protocol. Diagrammatically, this side information is depicted as:
\[%
\InputIfFileExists{Diagrams/sideInformationProcess.tikz}{}{\input{./figures/Diagrams/sideInformationProcess.tikz}}\]

Lack of side-information for a process would imply that $g$ must separate such that the side-information is independent of the process $f$. Indeed, this must be the case for any such $g$,~i.e.:
\beq\label{eq:PurityOfProc2}
}\ \ =\ \ %
\InputIfFileExists{Diagrams/pure1.tikz}{}{\input{./figures/Diagrams/pure1.tikz}}\qquad \Longrightarrow\qquad %
\InputIfFileExists{Diagrams/pure2.tikz}{}{\input{./figures/Diagrams/pure2.tikz}}\ \ = \ \ %
\InputIfFileExists{Diagrams/pure4.tikz}{}{\input{./figures/Diagrams/pure4.tikz}}
\eeq
or in other words, all dilations of $f$ must separate. As mentioned in remark~\ref{Rem:QuantumPure}, the separability of dilations (cf. Proposition \ref{prop:reducedprocpure}) has been proposed as a definition of process-purity. Indeed for the case of quantum theory, this corresponds to the expected notion of purity, that is that the CPTP
 map must be Kraus rank $1$. remarkably, however, in the form of (\ref{eq:PurityOfProc2}), this definition does not extend to general processes of classical probability theory. In fact, nor does it do so for any theory that has broadcasting:

\begin{proposition}\label{prop:broadcastnogo}
If a non-trivial theory has broadcasting and one defines purity by means of (\ref{eq:PurityOfProc2}), then plain wires (i.e.,~identity processes) are not pure.
\end{proposition}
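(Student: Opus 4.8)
The plan is to use the broadcasting map on a non-trivial system as a dilation of the plain wire that refuses to separate. Fix a system $A$ equipped with broadcasting, i.e.\ a process $\Delta\colon A\to A\otimes A$ for which discarding the right output satisfies the leak law (\ref{eq:leakdisc1}) and, because it broadcasts, discarding the left output also returns the identity (\ref{eq:leakdisc2}). The first step is to observe that the right-counit law (\ref{eq:leakdisc1}) says precisely that $\Delta$ is a dilation of $\idwire$ in the sense of (\ref{eq:PurityOfProc2}): a process with one extra output whose discarding recovers the identity.

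Next I would argue by contradiction. If $\idwire$ were pure under the definition (\ref{eq:PurityOfProc2}), then every dilation of it would separate, so applied to $\Delta$ this forces $\Delta=\idwire\otimes\rho$ for some causal state $\rho$ on the extra (right) wire. The decisive move is to feed this separated form into the \emph{other} counit equation (\ref{eq:leakdisc2}). Discarding the left output of $\idwire\otimes\rho$ discards the straight-through wire, so the composite no longer depends on its input and reduces to the constant process that ignores the input and prepares $\rho$; equation (\ref{eq:leakdisc2}) therefore collapses to the assertion that this constant process equals the plain wire.

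To close, I would invoke non-triviality to rule this out. Post-composing both sides of that assertion with two distinct causal states $\sigma_1\neq\sigma_2$ of $A$, the right-hand side returns $\sigma_1$ and $\sigma_2$ respectively, whereas the left-hand side returns $\rho$ in both cases, since discarding a causal state gives the number one. Hence $\sigma_1=\rho=\sigma_2$, contradicting $\sigma_1\neq\sigma_2$; such distinct states exist precisely because the theory is non-trivial, and so $\idwire$ is not pure.

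The only point requiring care is the bookkeeping of the two counit laws, and this is exactly where the broadcasting hypothesis (rather than a generic leak) is used: the right-counit law (\ref{eq:leakdisc1}) certifies $\Delta$ as a legitimate dilation, while the left-counit law (\ref{eq:leakdisc2})---the surplus condition distinguishing broadcasting---manufactures the constant process and hence the contradiction. Everything else is the elementary fact that a constant process cannot equal the identity on a system carrying more than one state.
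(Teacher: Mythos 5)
Your proof is correct and follows essentially the same route as the paper's: purity applied to the right-counit equation exhibits the broadcasting map as a dilation of the identity and forces it to separate as $\mathrm{id}\otimes\rho$, after which the left-counit (broadcasting) equation collapses the plain wire to the constant process that discards and prepares $\rho$. The only immaterial difference is the last step: the paper concludes that consequently \emph{all} causal processes are constant, i.e.\ the theory is trivial, whereas you cash out non-triviality as the existence of two distinct causal states and contradict that directly---two equivalent ways of finishing the same argument.
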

\begin{proof}
Assuming identities are pure and applying (\ref{eq:PurityOfProc2}) to the defining equation of a leak (\ref{eq:leakdisc1}), we obtain:
\beq\label{eq:broadcastnogo}
\InputIfFileExists{Diagrams/broadcast.tikz}{}{\input{./figures/Diagrams/broadcast.tikz}}\ \ = \ \ %
}
\eeq
that is, it is a constant leak. However, then, from the second defining equation of broadcasting, we obtain:
\[
\begin{tikzpicture}
	\begin{pgfonlayer}{nodelayer}
		\node [style=none] (0) at (0, -1.5) {};
		\node [style=none] (1) at (0, 1.5) {};
	\end{pgfonlayer}
	\begin{pgfonlayer}{edgelayer}
		\draw [style=swap] (1.center) to (0.center);
	\end{pgfonlayer}
\end{tikzpicture}}
\ \ \stackrel{(\ref{eq:leakdisc2})}{=} \ \
\InputIfFileExists{Diagrams/pure9.tikz}{}{\input{./figures/Diagrams/pure9.tikz}}
\ \ \stackrel{(\ref{eq:broadcastnogo})}{=} \ \
\begin{tikzpicture}
	\begin{pgfonlayer}{nodelayer}
		\node [style=point] (0) at (0, 0.75) {$\rho$};
		\node [style=none] (1) at (0, 1.75) {};
		\node [style=none] (2) at (0, -1.75) {};
		\node [style=upground] (3) at (0, -0.75) {};
	\end{pgfonlayer}
	\begin{pgfonlayer}{edgelayer}
		\draw (1.center) to (0);
		\draw [style=swap] (3) to (2.center); 
	\end{pgfonlayer}
\end{tikzpicture}}
\]
that is, each plain wire is a constant process, and hence, the theory is trivial, since as a consequence, all processes must then be constant since for (causal) processes, we have:
\[
}\ \ =\ \ %
\begin{tikzpicture}
	\begin{pgfonlayer}{nodelayer}
		\node [style=box] (0) at (0, -1.5) {$f$};
		\node [style=none] (1) at (0, 2.75) {};
		\node [style=none] (2) at (0, -2.75) {};
	\end{pgfonlayer}
	\begin{pgfonlayer}{edgelayer}
		\draw (1.center) to (0);
		\draw (0) to (2.center);
	\end{pgfonlayer}
\end{tikzpicture}}\ \ =\ \ %
\InputIfFileExists{Diagrams/f-box-disc.tikz}{}{\input{./figures/Diagrams/f-box-disc.tikz}}\ \ =\ \ %
}
\]
Hence, in a non-trivial theory with broadcasting, identities cannot be pure in the sense of (\ref{eq:PurityOfProc2}).
\end{proof}

From the first part of this proof, namely that this definition of purity implies that leaks must be constant, it follows that this issue arises in any theories with non-constant leaks. We can think of this as the fact that, if a system has a leak, then there is irreducible side-information contained within the system itself:
\[
\InputIfFileExists{Diagrams/sideInformationSystem.tikz}{}{\input{./figures/Diagrams/sideInformationSystem.tikz}}
\]

Fortunately, leaks also allow us to fix this problem. Firstly, let us suppose that a theory has leaks and also has a pure process $f$. Then, clearly, the following is a dilation of $f$:
\[
\InputIfFileExists{Diagrams/leakBeforeAfter.tikz}{}{\input{./figures/Diagrams/leakBeforeAfter.tikz}}
\]
where $l$ is causal. One may therefore consider explicitly bringing leaks into play in the definition of purity. A first step in this direction is to weaken (\ref{eq:PurityOfProc2}) as follows:
\beq\label{def:EI}
}\ \ =\ \ %
\InputIfFileExists{Diagrams/pure1.tikz}{}{\input{./figures/Diagrams/pure1.tikz}}\qquad \Longrightarrow\qquad \exists\ \ %
\InputIfFileExists{Diagrams/plainleak.tikz}{}{\input{./figures/Diagrams/plainleak.tikz}},\ \ %
\InputIfFileExists{Diagrams/j2.tikz}{}{\input{./figures/Diagrams/j2.tikz}}\ \&\ \ %
\InputIfFileExists{Diagrams/causalL.tikz}{}{\input{./figures/Diagrams/causalL.tikz}}\ \ : \ \ %
\InputIfFileExists{Diagrams/pure2.tikz}{}{\input{./figures/Diagrams/pure2.tikz}}\ \ = \ \ %
\InputIfFileExists{Diagrams/leakBeforeAfter.tikz}{}{\input{./figures/Diagrams/leakBeforeAfter.tikz}}\eeq

 However, now, we have the opposite problem: all classical processes, including all states, are pure!
 (See Appendix~\ref{proof:ClassicalDilations}.) It is clear that we are missing a constraint. The original idea was that for a~process to be pure, it should have no side-information that some eavesdropper could take advantage of. However, we have shown that for some systems, there is irreducible side-information represented by leakage. Therefore, to ensure that the eavesdropper cannot gain information or influence the process, we must demand that the process does not interact with this irreducible side-information, such that leaking before or after is equivalent:

\beq \label{def:LP} \forall\ \ %
\InputIfFileExists{Diagrams/plainleak.tikz}{}{\input{./figures/Diagrams/plainleak.tikz}}\ \ \exists\ \ %
\InputIfFileExists{Diagrams/j2.tikz}{}{\input{./figures/Diagrams/j2.tikz}} \text{ and }\ \forall\ \ %
\InputIfFileExists{Diagrams/j2.tikz}{}{\input{./figures/Diagrams/j2.tikz}}\ \ \exists \ \ %
\InputIfFileExists{Diagrams/plainleak.tikz}{}{\input{./figures/Diagrams/plainleak.tikz}} \text{ such that }\ \ %
\InputIfFileExists{Diagrams/pure3.tikz}{}{\input{./figures/Diagrams/pure3.tikz}}\ \ = \ \ %
\InputIfFileExists{Diagrams/j1.tikz}{}{\input{./figures/Diagrams/j1.tikz}} \eeq

Hence, we propose the following definition of process-purity, which packages these two conditions, (\ref{def:EI}) and (\ref{def:LP}), into a neat form:

\begin{definition} $f$ is pure if and only if:
\label{def:pureeq}
\beq
}\ \ =\ \ %
\InputIfFileExists{Diagrams/pure1.tikz}{}{\input{./figures/Diagrams/pure1.tikz}}\quad \Longrightarrow\quad \exists\ \ %
\InputIfFileExists{Diagrams/plainleak.tikz}{}{\input{./figures/Diagrams/plainleak.tikz}}\ \& \ \ %
\InputIfFileExists{Diagrams/j2.tikz}{}{\input{./figures/Diagrams/j2.tikz}} \ \ : \ \ %
\InputIfFileExists{Diagrams/pure2.tikz}{}{\input{./figures/Diagrams/pure2.tikz}}\ \ = \ \ %
\InputIfFileExists{Diagrams/pure3.tikz}{}{\input{./figures/Diagrams/pure3.tikz}}\ \ = \ \ %
\InputIfFileExists{Diagrams/leakAfter.tikz}{}{\input{./figures/Diagrams/leakAfter.tikz}}\eeq
\end{definition}

This ensures that the only side-information is this irreducible kind, i.e., system leakage, and~moreover, that pure processes do not interact with this irreducible side information. To further motivate this definition, we will show that it provides a sensible definition for quantum, classical and composite systems. However, first, note that for states, this definition reduces to:

\begin{example}
A state $\psi$ is pure if we have:
\[
\begin{tikzpicture}
	\begin{pgfonlayer}{nodelayer}
		\node [style=point] (0) at (0, -0.25) {$\psi$};
		\node [style=none] (1) at (0, 0.75) {};
	\end{pgfonlayer}
	\begin{pgfonlayer}{edgelayer}
		\draw (1.center) to (0);
	\end{pgfonlayer}
\end{tikzpicture}}\ \ =\ \ %
\InputIfFileExists{Diagrams/pure5.tikz}{}{\input{./figures/Diagrams/pure5.tikz}}\qquad \Longrightarrow\qquad %
\InputIfFileExists{Diagrams/pure6.tikz}{}{\input{./figures/Diagrams/pure6.tikz}}\ \ = \ \ %
\begin{tikzpicture}
	\begin{pgfonlayer}{nodelayer}
		\node [style=point] (0) at (-0.75, -0.25) {$\psi$};
		\node [style=none] (1) at (-0.75, 0.75) {};
		\node [style=none] (2) at (0.75, 0.75) {};
		\node [style=point] (3) at (0.75, -0.25) {$\rho$};
	\end{pgfonlayer}
	\begin{pgfonlayer}{edgelayer}
		\draw (1.center) to (0);
		\draw (2.center) to (3);
	\end{pgfonlayer}
\end{tikzpicture}}
\]
\end{example}

This is the same as the original definition, and so, we see that it is only for general processes that this new definition is necessary. Similarly, in the case of quantum theory, it is only the first condition that provides a non-trivial constraint:
\begin{example}[quantum purity]
As for quantum theory, the only leaks are constant leaks, Condition (\ref{def:EI}) in Definition \ref{def:pureeq} reduces to (\ref{eq:PurityOfProc2}), while Condition (\ref{def:LP}) becomes trivial.
 \end{example}

Whilst, in the classical case, as we have mentioned above, (\ref{def:EI}) is satisfied by all classical processes, and so, it is only (\ref{def:LP}) that needs to be considered:
 \begin{example}[classical purity]\label{ex:classicalPurity}
All pure classical processes, between an $n$ and $m$ state system, are of the form:
\beq
\InputIfFileExists{Diagrams/pureClassicalNew.tikz}{}{\input{./figures/Diagrams/pureClassicalNew.tikz}}
\eeq
where we can define the `upside-down broadcasting map' by:
\[%
\InputIfFileExists{Diagrams/classicalMatching.tikz}{}{\input{./figures/Diagrams/classicalMatching.tikz}}\]
and the black/white dot is any process that satisfies:
\[%
\InputIfFileExists{Diagrams/blackWhite.tikz}{}{\input{./figures/Diagrams/blackWhite.tikz}} \qquad\qquad \text{and} \qquad\qquad %
\InputIfFileExists{Diagrams/blackWhite2.tikz}{}{\input{./figures/Diagrams/blackWhite2.tikz}}\]
\end{example}

\proof
We prove here that pure classical processes must be of this form and leave the proof that any process of this form is pure to Appendix~\ref{proof:PureProcesses}.

First consider the condition:
\[ \forall\ \ %
\InputIfFileExists{Diagrams/plainleak.tikz}{}{\input{./figures/Diagrams/plainleak.tikz}}\ \ \exists\ \ %
\InputIfFileExists{Diagrams/j2.tikz}{}{\input{./figures/Diagrams/j2.tikz}} \text{ such that }\ \ %
\InputIfFileExists{Diagrams/pure3.tikz}{}{\input{./figures/Diagrams/pure3.tikz}}\ \ = \ \ %
\InputIfFileExists{Diagrams/j1.tikz}{}{\input{./figures/Diagrams/j1.tikz}} \]
for the special case where:
\[%
\InputIfFileExists{Diagrams/plainleak.tikz}{}{\input{./figures/Diagrams/plainleak.tikz}}\ \ = \ \ \ %
\InputIfFileExists{Diagrams/broadcast.tikz}{}{\input{./figures/Diagrams/broadcast.tikz}}\]
and using the standard form for classical leaks to write:
\[%
\InputIfFileExists{Diagrams/CPP1.tikz}{}{\input{./figures/Diagrams/CPP1.tikz}}\]
Then, we can show that:
\[%
\InputIfFileExists{Diagrams/CPP2.tikz}{}{\input{./figures/Diagrams/CPP2.tikz}}\]
This implies that, for all $i$ and $j$:
\[%
\InputIfFileExists{Diagrams/CPP3.tikz}{}{\input{./figures/Diagrams/CPP3.tikz}}\]
so, for each $i$ and $j$:
\[\mathsf{f}_i^j:=%
\InputIfFileExists{Diagrams/CPP4.tikz}{}{\input{./figures/Diagrams/CPP4.tikz}}=0\qquad \text{or} \qquad \mathsf{l}_i^j:=%
\InputIfFileExists{Diagrams/CPP5.tikz}{}{\input{./figures/Diagrams/CPP5.tikz}}=1\]
Causality of $l$ then implies that, for each $j$, there can only be a single $i$ where $\mathsf{l}_i^j=1$, and so, for all other $i$, we must have $\mathsf{f}_i^j=0$. This means that in each row of $\mathsf{f}_i^j$, there is at most a single non-zero element.

We can run through this argument in the opposite direction using the condition:
\[ \forall\ \ %
\InputIfFileExists{Diagrams/j2.tikz}{}{\input{./figures/Diagrams/j2.tikz}}\ \ \exists \ \ %
\InputIfFileExists{Diagrams/plainleak.tikz}{}{\input{./figures/Diagrams/plainleak.tikz}} \text{ such that }\ \ %
\InputIfFileExists{Diagrams/pure3.tikz}{}{\input{./figures/Diagrams/pure3.tikz}}\ \ = \ \ %
\InputIfFileExists{Diagrams/j1.tikz}{}{\input{./figures/Diagrams/j1.tikz}} \]
which shows that $\mathsf{f}_i^j$ can have at most a single non-zero element in each column. This is precisely what is enforced by the black/white dot in the above form; the value of the non-zero elements is then determined by the state $r$. Hence, we can write $f$ in the desired form.
\endproof

\begin{example}
If we consider purity for {causal} classical processes, then we find that the pure processes are those that are reversible (i.e., are isometries).
\end{example}
\proof
The definition of purity, and the standard form for classical leaks, requires that:
\[%
\InputIfFileExists{Diagrams/pureCausalClassical1.tikz}{}{\input{./figures/Diagrams/pureCausalClassical1.tikz}}\]
and so, we have:
\[%
\InputIfFileExists{Diagrams/pureCausalClassical2.tikz}{}{\input{./figures/Diagrams/pureCausalClassical2.tikz}}\]
Therefore, $f$ is reversible in the sense that it has a left-inverse, i.e., $l$.
\endproof

Finally, we consider the composite case, where the conjunction of (\ref{def:EI}) and (\ref{def:LP}) is necessary:

\begin{example}[Composite classical-quantum purity]
Pure processes are:
\beq%
\InputIfFileExists{Diagrams/classQuantPureProcNew.tikz}{}{\input{./figures/Diagrams/classQuantPureProcNew.tikz}}\eeq
where we denote the classical system with a dotted line, and:
\beq%
\InputIfFileExists{Diagrams/classQuantPureProc2.tikz}{}{\input{./figures/Diagrams/classQuantPureProc2.tikz}}\eeq
is pure for all $i$.
\end{example}
\proof
Again, we prove the interesting direction here that pure processes on composite systems must be of this form and, again, leave the other direction to Appendix \ref{proof:PureProcesses}.

Note that a generic process can be written as:
\[%
\InputIfFileExists{Diagrams/QCPP1.tikz}{}{\input{./figures/Diagrams/QCPP1.tikz}}\]
An (almost) identical argument to the classical case shows that if this is pure, it can be written as:
\[%
\InputIfFileExists{Diagrams/classQuantPureProcNew.tikz}{}{\input{./figures/Diagrams/classQuantPureProcNew.tikz}}\]
We therefore move on to considering the other part of the definition of purity, that is that any dilation can be written as a leak; that means that any dilation of this process can be written as:
\[%
\InputIfFileExists{Diagrams/QCPP4.tikz}{}{\input{./figures/Diagrams/QCPP4.tikz}}\ \ =\ \sum_i\ \ %
\InputIfFileExists{Diagrams/QCPP3.tikz}{}{\input{./figures/Diagrams/QCPP3.tikz}}\]
Now, note that any collection of dilations of the processes:
\[%
\InputIfFileExists{Diagrams/classQuantPureProc2.tikz}{}{\input{./figures/Diagrams/classQuantPureProc2.tikz}}\ \ := \ \ %
\begin{tikzpicture}
	\begin{pgfonlayer}{nodelayer}
		\node [style=box] (0) at (0, -0) {$f_i$};
		\node [style=none] (1) at (0, 1) {};
		\node [style=none] (2) at (0, -0.9999998) {};
	\end{pgfonlayer}
	\begin{pgfonlayer}{edgelayer}
		\draw (1.center) to (0);
		\draw (0) to (2.center);
	\end{pgfonlayer}
\end{tikzpicture}
} \ \ =\ \ %
\InputIfFileExists{Diagrams/QCPP5.tikz}{}{\input{./figures/Diagrams/QCPP5.tikz}}\]
defines a dilation of the whole process, which must be able to be written as a leak:
\[\sum_i\ \ %
\InputIfFileExists{Diagrams/QCPP6.tikz}{}{\input{./figures/Diagrams/QCPP6.tikz}}\ \ =\ \sum_i\ \ %
\InputIfFileExists{Diagrams/QCPP3.tikz}{}{\input{./figures/Diagrams/QCPP3.tikz}}\]
Therefore, each $g_i$ must separate, and hence, the $f_i$ are each pure quantum processes.
\endproof

An immediate consequence of this is the following.
\begin{proposition}
The pure quantum to classical or classical to quantum maps are separable.
\end{proposition}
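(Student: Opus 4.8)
The plan is to read the result straight off the characterisation of pure composite classical-quantum processes established in the Example immediately preceding the proposition. A quantum-to-classical map is exactly the special case of a composite classical-quantum process in which the classical input and the quantum output are both trivial, and dually a classical-to-quantum map is the case in which the classical output and the quantum input are trivial. So the first move is to specialise the general pure form displayed there to each of these two degenerate situations.

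The only input I need from that Example is the shape of its argument, which was noted to be ``almost identical'' to the classical case of Example~\ref{ex:classicalPurity}. Viewing the process as a matrix whose rows are indexed by the classical output values $j$ and whose columns by the classical input values $i$, purity forces at most one non-zero entry in each row and in each column, and moreover each surviving non-zero block must itself be a pure quantum process, since by Proposition~\ref{prop:reducedprocpure} its dilations separate. I would simply impose the relevant degeneracy on this matrix picture.

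For a pure quantum-to-classical map the classical input is trivial, so the matrix has a single column; the ``at most one non-zero entry per column'' constraint then kills every output value except one, say $j_0$, leaving a single pure quantum effect $e$ sitting in the entry $j_0$. Hence $f$ factors as $e$ on the quantum input followed by the basis state $\ket{j_0}$ on the classical output, a disconnected diagram, i.e.\ a separable process. For a pure classical-to-quantum map the classical output is trivial, so the matrix has a single row; the ``at most one non-zero entry per row'' constraint leaves a single pure quantum state $\psi$ in some entry $i_0$, and $f$ factors as the basis effect $\bra{i_0}$ on the classical input followed by $\psi$ on the quantum output, hence again separates.

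The only genuine content, and the step I would be most careful about, is the row/column bookkeeping: one must confirm that collapsing the classical input (respectively output) to the trivial system really does reduce the matrix to a single column (respectively row), so that the single surviving one-non-zero-per-line constraint already pins the process down to a single pure block. Once that is verified, the statement is immediate and requires no appeal to conditions~(\ref{def:EI}) or~(\ref{def:LP}) beyond what the preceding Example has already used.
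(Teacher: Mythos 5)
Your proposal is correct and follows essentially the same route as the paper: the paper's proof also views a classical-to-quantum (or quantum-to-classical) map as a composite classical-quantum process with trivial systems and reads separability directly off the characterisation of pure composite processes in the preceding Example. Your explicit row/column bookkeeping just spells out what the paper leaves implicit in its diagrams.
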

\proof
First note that,
\[ %
\InputIfFileExists{Diagrams/CtoQpureMaps.tikz}{}{\input{./figures/Diagrams/CtoQpureMaps.tikz}} \]
Then, using the above result regarding pure maps for composite quantum classical systems, we have,
\[ %
\InputIfFileExists{Diagrams/CtoQpureMaps2.tikz}{}{\input{./figures/Diagrams/CtoQpureMaps2.tikz}}\]
Similarly, we obtain separability for pure quantum to classical maps.
\endproof

This means that there is no {pure} way to transform between classical and quantum information.

\section{Conclusions}

In this paper, we introduce the concept of {leaks} to generalised process theories. The definition of which can be thought of as a ``one-way'' broadcasting map. These prove to be very useful for understanding various aspects of quantum theory from a physically well-motivated perspective. In particular, we show:

\begin{itemize}
\item that quantum theory is a leak-free theory, whilst classical theory is maximally leaking, giving a~clear separation between the theories for which quantum theory is optimal.
\item how to construct sub-theories via a ``leak construction'', which can be thought of as the sub-theories that can be obtained from a dynamical decoherence mechanism. For quantum theory, we can obtain classical theory, composite quantum classical theory and, generally, finite dimensional C*-algebras from this construction \cite{CST}.
\item a characterisation of the leaks and pure processes for quantum, classical and composite systems; in particular, we demonstrate that there is no pure way to transform quantum systems into classical systems or vice versa.
\item that leaks are essential to defining purity of processes; we therefore introduce a novel definition of purity of processes, which makes sense both for quantum theory and for classical theory.
\end{itemize}

\subsection*{Future Work}
In this paper, we have shown how classical theory emerges from quantum theory due to the leak construction, providing a process-theoretic perspective on why the world on a large scale appears to us to be classical. It is natural to ask: Is there some deeper theory of nature than quantum theory from which quantum theory emerges in an analogous way? This is the subject of a forthcoming paper \cite{hyperdecoherence}. A second, related question, would be to ask: What does it imply about a theory if it can obtain classical theory via a leak construction; is the ability for this to happen in quantum theory special or is this a generic feature of general theories?

We have also shown that quantum theory is minimally leaking and classical theory maximally; moreover, if we start from a process theory describing finite dimensional C*-algebras, then quantum theory is singled out as the unique minimally-leaking theory. Can this idea lead to a complete reconstruction of quantum theory \cite{reconstruction}?

As mentioned in Example~\ref{ex:PreLeakCauses}, one interpretation of the leak construction is as a way to represent systems for which there is a missing reference frame, that is we can write the pre-leak as (\cite{bartlett2007reference}, Section~IVB):
\[%
\InputIfFileExists{Diagrams/preleakAsReferenceFrame.tikz}{}{\input{./figures/Diagrams/preleakAsReferenceFrame.tikz}}\]
where $G$ is a group associated with a reference frame for a particular degree of freedom, $U_g$ is the representation of $G$ on the system of interest and $g$ the state of the reference system. Note, however, that making sense of this integral for general symmetry groups requires the reference be an infinite dimensional quantum system and so is beyond the scope of this paper. One could replace, at least for compact groups, the integral by a finite convex mixture (using the results of \cite{Chiri1}, Corollary 33 from Caratheodory's theorem), for which the resulting idempotent would be the same. This can be thought of as there only being a finite set of possible orientations for the reference frame. However, a comprehensive understanding of the connections here would require consideration of the infinite dimensional case. Moreover, we know that in the finite dimensional case, the leak construction leads to C*-algebraic systems only; however, it remains an interesting open question as to what the leak construction leads to for infinite dimensional systems.

\vspace{6pt}
\section*{Acknowledgments}We thank Aleks Kissinger, Dan Marsden, Rob Spekkens and Sean Tull for useful feedback. John~Selby was supported by the EPSRC 
 through the Controlled Quantum Dynamics Centre for Doctoral Training, and Bob Coecke is supported by the U.S. Air Force Office of Scientific Research.




\appendix

\section{Mathematical Tools for Proofs}\label{mathsAppendix}

\begin{definition}
Sums can be defined by the fact that they distribute over diagrams, that is:
\[%
\InputIfFileExists{Diagrams/sumsDistribute.tikz}{}{\input{./figures/Diagrams/sumsDistribute.tikz}}\]
\end{definition}

In particular, in classical probability theory, we can take sums of diagrams where the sum is the standard sum of matrices. In fact, this provides us with a matrix calculus for our diagrams. In~particular, we have a basis and co-basis for each system, denoted:
\[\left\{%
\begin{tikzpicture}
	\begin{pgfonlayer}{nodelayer}
		\node [style=point] (0) at (0, -0) {$i$};
		\node [style=none] (1) at (0, 0.9999999) {};
	\end{pgfonlayer}
	\begin{pgfonlayer}{edgelayer}
		\draw (1.center) to (0);
	\end{pgfonlayer}
\end{tikzpicture}
}\right\}_{i=1}^n\quad \text{and}\quad\left\{%
\begin{tikzpicture}
	\begin{pgfonlayer}{nodelayer}
		\node [style=copoint] (0) at (0, -0) {$j$};
		\node [style=none] (1) at (0, -0.9999999) {};
	\end{pgfonlayer}
	\begin{pgfonlayer}{edgelayer}
		\draw (1.center) to (0);
	\end{pgfonlayer}
\end{tikzpicture}
}\right\}_{j=1}^n\]
respectively, such that they are orthonormal:
\[%
\begin{tikzpicture}
	\begin{pgfonlayer}{nodelayer}
		\node [style=copoint] (0) at (0, 0.4999999) {$j$};
		\node [style=point] (1) at (0, -0.5000003) {$i$};
	\end{pgfonlayer}
	\begin{pgfonlayer}{edgelayer}
		\draw (1) to (0);
	\end{pgfonlayer}
\end{tikzpicture}
}\ \ =\ \ \delta_{ij}\]
Then, this provides a decomposition of the identity
\[%
\InputIfFileExists{Diagrams/decompositionIdentity.tikz}{}{\input{./figures/Diagrams/decompositionIdentity.tikz}}\]
which allows us to write any process as:
\[%
\InputIfFileExists{Diagrams/matrixRepresentation.tikz}{}{\input{./figures/Diagrams/matrixRepresentation.tikz}}\]
where it is simple to check that sequential composition then coincides with a matrix multiplication, parallel composition with the matrix tensor product and diagrammatic sum with the sum of matrices.

\begin{definition}
For each classical system type, we have a family of {spiders} diagrammatically defined by, firstly:
\[%
\InputIfFileExists{Diagrams/spiderFusion.tikz}{}{\input{./figures/Diagrams/spiderFusion.tikz}}\]
and secondly, that the symmetries of the representation as spiders are respected.
Alternately, spiders can be defined via the matrix representation as:
\[%
\InputIfFileExists{Diagrams/spiderMatrix.tikz}{}{\input{./figures/Diagrams/spiderMatrix.tikz}}\]
\end{definition}
This family of maps is particularly important as, for classical theory at least, they allow us to define various concepts that we have used throughout the paper in a unified way.
Firstly, the~broadcasting map can now be seen as just an example spider with one input and two outputs, but~moreover, we have:
\[%
\InputIfFileExists{Diagrams/cupSpider.tikz}{}{\input{./figures/Diagrams/cupSpider.tikz}}\qquad\qquad%
\InputIfFileExists{Diagrams/capSpider.tikz}{}{\input{./figures/Diagrams/capSpider.tikz}}\qquad\qquad%
\InputIfFileExists{Diagrams/discardSpider.tikz}{}{\input{./figures/Diagrams/discardSpider.tikz}}\]
The feedback-loop we introduced can also now be interpreted as the composite of two spiders:
\[%
\InputIfFileExists{Diagrams/feedbackloopSpider.tikz}{}{\input{./figures/Diagrams/feedbackloopSpider.tikz}}\]

We moreover want to consider a way to join spiders of different dimensionality (denoted by using a different colour), which is exactly what the black/white dots achieve.
\begin{definition}Diagrammatically, the black/white dots are any process satisfying:
\[%
\InputIfFileExists{Diagrams/spiderLink.tikz}{}{\input{./figures/Diagrams/spiderLink.tikz}}\]
which is equivalent to how they were introduced in Example~\ref{ex:classicalPurity}. Alternatively, their matrix representation is:
\beq%
\InputIfFileExists{Diagrams/wbSpider.tikz}{}{\input{./figures/Diagrams/wbSpider.tikz}}\eeq
requiring that $l\leq \mathsf{Min}(n,m)$ and $\pi_i$ are arbitrary permutations of the basis elements. These are then just matrices with elements $\{0,1\}$ with at most a single one in each row and column.
\end{definition}

\section{Dilations of Classical Processes}\label{proof:ClassicalDilations}
 Any dilation of a classical process $f$ can be written as:
\[%
\InputIfFileExists{Diagrams/classicalDilation.tikz}{}{\input{./figures/Diagrams/classicalDilation.tikz}}\]
to check this define $l$ by its matrix elements as:
\[\mathsf{l}_{ik}^j := \left\{ \begin{array}{ccc} 1 & & \text{if}\quad \mathsf{f}_i^k=0 \\ & \\ \frac{\mathsf{F}_i^{kj}}{\mathsf{f}_i^k} & & \text{otherwise} \end{array}\right. \]
and then, it is simple to check this satisfies the above equation and, moreover, is causal.

\section{Pure Quantum-Classical Composite Processes}\label{proof:PureProcesses}
We need to prove that our definition of purity, i.e.,~Conditions (\ref{def:LP}) and (\ref{def:EI}), is satisfied by any process of the form:
\[%
\InputIfFileExists{Diagrams/classQuantPureProcNew.tikz}{}{\input{./figures/Diagrams/classQuantPureProcNew.tikz}}\]
where:
\[%
\begin{tikzpicture}
	\begin{pgfonlayer}{nodelayer}
		\node [style=box] (0) at (0, -0) {$f_i$};
		\node [style=none] (1) at (0, 1) {};
		\node [style=none] (2) at (0, -0.9999998) {};
	\end{pgfonlayer}
	\begin{pgfonlayer}{edgelayer}
		\draw [style=cWire] (0) to (1.center);
		\draw [style=cWire] (0) to (2.center);
	\end{pgfonlayer}
\end{tikzpicture}
}\ \ :=\ \ %
\InputIfFileExists{Diagrams/classQuantPureProc2.tikz}{}{\input{./figures/Diagrams/classQuantPureProc2.tikz}}\]
is pure for all $i$.

That (\ref{def:LP}) is satisfied is a straightforward proof once the following observation, easily verified by a straightforward calculation, is made:
	\[ \forall\ \ %
\begin{tikzpicture}
	\begin{pgfonlayer}{nodelayer}
		\node [style=small box] (0) at (0, -0) {$l$};
		\node [style=none] (1) at (0, -1) {};
		\node [style=none] (2) at (0, 1) {};
	\end{pgfonlayer}
	\begin{pgfonlayer}{edgelayer}
		\draw [style=cWire] (0) to (1.center);
		\draw [style=cWire] (2.center) to (0);
	\end{pgfonlayer}
\end{tikzpicture}}\ \ \exists \ \ %
\begin{tikzpicture}
	\begin{pgfonlayer}{nodelayer}
		\node [style=small box] (0) at (0, -0) {$\tilde{l}$};
		\node [style=none] (1) at (0, 1) {};
		\node [style=none] (2) at (0, -1) {};
	\end{pgfonlayer}
	\begin{pgfonlayer}{edgelayer}
		\draw [style=cWire] (1.center) to (0.center);
		\draw [style=cWire] (0.center) to (2.center);
	\end{pgfonlayer}
\end{tikzpicture}}\] such that $l$ and $\tilde{l}$ are both causal and: \[ %
\InputIfFileExists{Diagrams/causalExtension3.tikz}{}{\input{./figures/Diagrams/causalExtension3.tikz}} \]
 To check this, simply define $\tilde{l}$ as:
 \[ %
\InputIfFileExists{Diagrams/causalExtension4.tikz}{}{\input{./figures/Diagrams/causalExtension4.tikz}} \]
 where $J=\mathsf{Ker}\left[%
\begin{tikzpicture}
	\begin{pgfonlayer}{nodelayer}
		\node [style=bwSpider] (0) at (0, -0) {};
		\node [style=none] (1) at (0, 0.5000001) {};
		\node [style=none] (2) at (0, -0.5000001) {};
	\end{pgfonlayer}
	\begin{pgfonlayer}{edgelayer}
		\draw [style=cWire](0) to (1.center);
		\draw [style=cWire](0) to (2.center);
	\end{pgfonlayer}
\end{tikzpicture}
}\right]$.

That (\ref{def:EI}) is satisfied is also simple if, using the purity of the $f_i$, we can write the dilation as:
\[\sum_i \ \ %
\InputIfFileExists{Diagrams/genericDilation.tikz}{}{\input{./figures/Diagrams/genericDilation.tikz}}\]
and note that this can be written as a leak by defining:
\[%
\InputIfFileExists{Diagrams/leaksFromDilations2.tikz}{}{\input{./figures/Diagrams/leaksFromDilations2.tikz}}\qquad \ \text{where} \qquad \ %
\InputIfFileExists{Diagrams/leakFromDilations.tikz}{}{\input{./figures/Diagrams/leakFromDilations.tikz}}\]

\bibliographystyle{plain}
\bibliography{main}

\end{document}